\newtheorem{theorem}{Theorem}
\newtheorem{remark}[theorem]{Remark}
\newtheorem{mydef}[theorem]{Definition}
\newtheorem{lemma}[theorem]{Lemma}
\newtheorem{claim}[theorem]{Claim}
\newtheorem{corollary}[theorem]{Corollary}
\newcommand{\Ta}{\Theta}
\newcommand{\Om}{\Omega}
\newcommand{\de}{\delta}
\newcommand{\ve}{\varepsilon}
\newcommand{\eat}[1]{{}}
\begin{document}

\title{Search using queries on indistinguishable items}

\author{Mark Braverman\thanks{Princeton University, research partially supported by an Alfred P. Sloan Fellowship, an NSF CAREER award, and a Turing Centenary Fellowship. }\and
Gal Oshri \thanks{Princeton University}}
%\author[1]{Mark Braverman}
%\author[2]{Gal Oshri}
%\affil[1]{Princeton University, research partially supported by an Alfred P. Sloan Fellowship, an NSF CAREER award, and a Turing Centenary Fellowship. }
%\affil[2]{Princeton University}
%\author{Mark Braverman\thanks{Princeton University, research partially supported by ...}  \and Gal Oshri\thanks{Princeton University}}

\maketitle

\begin{abstract}
We investigate the problem of determining a set $S$ of $k$ indistinguishable integers in the range $[1, n]$. 
The algorithm is allowed to query an integer $q\in [1,n]$, and receive a response comparing this integer to an integer randomly chosen from $S$. 
The algorithm has no control over which element of $S$ the query $q$ is compared to. 
We show tight bounds for this problem. In particular, we show that in the natural regime where $k\le n$, the optimal number of queries to attain $n^{-\Omega(1)}$ error probability is $\Ta(k^3 \log n)$. 
In the regime where $k>n$, the optimal number of queries is $\Ta(n^2 k \log n)$. 

Our main technical tools include the use of information theory to derive the lower bounds, and the application of noisy binary search 
in the spirit of Feige, Raghavan, Peleg, and Upfal (1994). In particular, our lower bound technique is likely to be applicable in other 
situations that involve search under uncertainty. 
%This problem can be generalized to any situation where we need to search for indistinguishable items. We show that when $k \leq \sqrt{n}$, the minimum number of queries required is $\Omega(k^3 \log{n})$. We present an optimal algorithm that solves this problem with running time $O(k^3 \log{n})$ which has a constant probability of error. Finally, we explore what happens when $k > \sqrt{n}$.
\end{abstract}

\newpage

\setcounter{page}{1}
%\doublespacing

\section{Introduction}
This paper investigates the problem of identifying a set $S$ of indistinguishable items by repeated queries where we know the range of values the items can take. At every query, we gain information based on our query and some random item from the set $S$ we are trying to find (we do not know which item was chosen). The overall simple statement of the problem makes it widely generalizable. The query can be thought of as an experiment in which we apply a measurement on an element of $S$ without knowing which element has been measured. The set of items can refer to a set of DNA strands in a ``soup'' of DNAs, passwords or any item that we might be interested in finding when we know what possible values the item may take. The queries can be viewed as tests on DNA strands, attempts at guessing a password or any trial we may run that will provide some information about one of the items in question. The specific problem we investigate is where the items are integers. Our queries are guesses of integers which return the result of a comparison with a chosen integer from the set we are trying to find.

As far as we know, this problem has not been investigated in the literature. However, it falls into the rich class of noisy search problems. Since we do not know which number was chosen when we query a number, we have to deal with a lack of information in trying to determine the set of numbers. Due to this missing information, it is not immediately obvious that there exists a solution to the problem. 

In this paper we give asymptotically tight upper and lower bounds for the number of queries needed to find a set $S$ of size $k$ of numbers from $\{1,\ldots,n\}$, where the queries are comparison queries. 

We briefly discuss similar problems that have been previously studied. Feige et al. explored the depth of noisy decision trees, where each node can be wrong with some constant probability, in \cite{Feige}. One of the problems they investigated is binary search where the result of each query is wrong with a constant probability. They presented an algorithm to solve this with running time $\Theta(\log{\frac{n}{Q}})$ where $n$ is the input set size and $Q$ is the probability of error of the algorithm. The algorithm we present uses a similar technique to the one used for noisy binary search in \cite{Feige}. 

The Renyi-Ulam game is also a related problem. In one variation of this game, we need to discover a chosen integer. To do this, we query a number and are told whether the number we are trying to find is greater than the number we guessed or not. However, some constant number of lies are allowed. In \cite{Spencer}, one lie is allowed, which means that one of the responses to our queries can be false. Similarly, Pelc discussed in \cite{Pelc} an algorithm for performing the search when one lie is allowed and concluded that the original question posed by Ulam (finding an integer between one and a million with one lie allowed) requires 25 queries. In \cite{Spencer}, \cite{Pelc} and other papers that explore the Renyi-Ulam game, some restriction is placed on the pattern of queries with false results. Ravikumar and Lakshmanan discussed such patterns (and why they are necessary to make the problem solvable) in \cite{Ravikumar}.

Another related problem is sorting from noisy information. Braverman and Mossel investigated this in \cite{Braverman}. The problem of sorting from noisy information is similar to our problem because in noisy sorting we can make comparisons between the items that need to be sorted, but each comparison may give us false information. This has applications, for example, in ranking sports teams where the comparisons are games between teams (one team wins) but the comparisons are noisy because the better team (which should have a higher rank) does not always win. Klein et al. also investigated this problem in \cite{Klein}. Apart from noisy sorting, they applied the same model to explore other problems, such as finding the maximum of $n$ numbers.

%While these problems are of a similar nature to the problem we describe, their solutions cannot be directly applied to solve our problem. One could try and model our problem as follows. Pick one of the numbers we are trying to find and refer to the other numbers as noise. Then, apply the noisy binary search algorithm from \cite{Feige} to find this number and do the same for all numbers we need to find. However, this does not work because the algorithm in \cite{Feige} requires that the probability of error at each node (noise) to be at most half, whereas in the proposed solution this would be greater than half if we are trying to find more than two integers. Thus, we see that there is a fundamental difference between the problems investigated in the literature and the problem we are exploring. 

The problem we are investigating is motivated by applications that involve a search for several items by repeated queries where we do not know which item was chosen to be compared with our query (i.e. the items are indistinguishable). One interpretation is where the items represent DNA strands in a mixture that we are trying to identify. We can perform tests that give us some information about one of the DNA strands in the mixture, but we do not know which one. Similarly, instead of trying to identify DNA strands, we might be trying to identify passwords where our queries give us some partial information about one password out of several that a particular user often uses (and switches between). 

We note that the applications mentioned do not take the exact form as the problem we explore. The items in our problem are integers and the queries are guesses of an integer that result in the response `less than or equal to' or `greater than'. In generalizing the problem to other applications, the form of items or queries may change. For example, the queries in the DNA mixture example may describe a property of a particular nucleotide instead of returning one of two possible answers. Therefore, the algorithm will have to be changed. However, a similar framework can be used which allows information to be gained despite the uncertainty regarding query responses due to the indistinguishability of the items. A solution to the problem we have posed can lead to the development of new methods for identifying a set of items where we know these items can only take on a certain range of values.
On the lower-bound side, our results show that information-theoretic quantities are very effective at measuring and upper-bounding information learned from queries, even when such information is only a fraction of one bit. 
We believe that the information-theoretic lower bound technique will generalize to tight lower bounds in other settings. 

We now discuss the results and structure of the paper. In Section \ref{sec:description}, we formally introduce the problem we are solving with the restriction that the number of chosen integers is significantly smaller than the range of integers available. We prove a lower bound for the problem in Section \ref{sec:lowerbound} using information theoretic techniques. This involves constructing the hard instances where we split the possible values the chosen integers can take into consecutive clusters of equal size and place one chosen integer in each such cluster. Intuitively, this forces the search algorithm to find the elements one at a time, which turns out to be costly due to the fact that we don't control the sample. To formalize this intuition, we calculate the entropy of the random variable representing a particular chosen integer (it may take values of the integers in one of the clusters described above). We then use the mutual information of this random variable and the random variable representing the responses to the queries we make to find the minimum number of queries required to find that chosen integer. After showing that the same minimum number of queries applies to at least half of the chosen integers, we reach a lower bound of $\Omega(k^3 \log{n})$, where $k$ is the size of the set $S$ and the elements of $S$  take integer values between $1$ and $n$ (inclusive). Further, this bound extends to all $k<n$, using a slightly different set of hard instances. When $k>n$ we obtain a lower bound of $\Om(k^2 n\log n)$.  In Section \ref{sec:algorithm}, we present an optimal algorithm for solving the problem, proving both its correctness and worst case running time of $O(k^3 \log{\frac{n}{\delta}})$ where $\delta$ is the probability of error. This shows that the lower bound is tight. Moreover, while the lower bound applies to finding $S$ even with a constant error probability, we see that the upper bound remains asymptotically the same even if we set the error $\de=n^{-O(1)}$ to be polynomially small.  %The algorithm we describe uses a few ideas from statistical learning theory to gain information from the queries. Furthermore, we use a one dimensional random walk model similar to that in \cite{Feige} to keep the probability of error small. 

Our results show that the problem we describe can be solved in practice when the items we are searching for can take a large number of values. This is because the dependence of the running time on $n$ grows as $\log{n}$. However, the number of items in $S$ needs to remain small because the dependence of the running time on $k$ grows as $k^3$.

\section{Problem definition} \label{sec:description}
We consider a (multi-)set $S$ of $k$ distinct integers where each is $X_i \in \{1, 2, \ldots, n\}$ for $1 \le i \le k$. Our goal is to discover the set $S$. The process is to repeat the following three steps: 
\begin{enumerate}
	\item Query an integer $Y \in \{1, 2, \ldots, n\}$.
	\item An integer $X_i$ is selected from $S$ uniformly at random.
	\item We are told whether $X_i \leq Y$ or $X_i > Y$.
\end{enumerate}
These three steps are repeated until we know what the $k$ integers in $S$ are. Our goal is to find the most efficient algorithm for determining $S$. Our model of computation is that queries are the costly operations. Therefore, by finding the most efficient algorithm we mean finding the algorithm that minimizes the number of queries made. We refer to this as `the problem' we are solving. Furthermore, for brevity, we refer to the two possible responses to queries as `$\leq$' ($X_i \leq Y$) and `$>$' ($X_i > Y$) and the $k$ integers in $S$ as `the chosen integers'. 

In this paper we give a complete characterization of the query complexity of this problem. 
Note that since the $X_i$ is selected at random from $S$, we cannot hope for a deterministic algorithm, and have to settle for a probabilistic performance guarantee. 
We focus on the regime where we are required to output the correct set $S$ except with some (possibly constant) probability $\de$. 
The answer can be broken down into three main regimes, which will 
be discussed in the analysis: (1) $k\ll n$, e.g. $k<\sqrt{n}$; (2) $\sqrt{n} < k < n$; and (3) $k\ge n$. 
The answer is given by the following main theorem:

\begin{theorem}
\label{thm:main1}
The number of queries needed to determine a multi-set $S\subset [n]$ of size $k$ with a given error $n^{-O(1)}<\de<1/4$ is $\Ta(k^3\log n)$ when $k\le n$, and $\Ta(k^2 n \log n)$ when $k\ge n$. 
\end{theorem}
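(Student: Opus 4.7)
I would prove the theorem by a matching information-theoretic lower bound and a noisy binary search upper bound, handling the regimes $k\le n$ and $k\ge n$ separately. For the lower bound when $k\le n$, I put a hard distribution on $S$ by partitioning $[n]$ into $k$ consecutive clusters of width $\lfloor n/k\rfloor$ and drawing $X_i$ uniformly at random from cluster $i$, independently across $i$. The cluster-to-element correspondence is given to the algorithm as free side information, which only strengthens the lower bound. Each $X_i$ then carries entropy $\log(n/k)=\Theta(\log n)$, and the goal is to show that even recovering a constant fraction of the $X_i$'s with constant success probability requires $\Omega(k^3\log n)$ queries.

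The central calculation is a per-query mutual-information bound. Fix a target cluster $i$ and let $a$ denote the number of indices of $S$ lying in clusters strictly below cluster $i$; because the cluster labelling is free, $a$ is known to the algorithm. A query $Y$ in any cluster $j\ne i$ reveals nothing about $X_i$, because the response is a deterministic function of which index was sampled together with the known cluster structure. For a query $Y$ inside cluster $i$, writing $Z:=[X_i\le Y]$, a direct calculation gives
\[
\Pr[R={\le}\mid Z=1]=\tfrac{a+1}{k},\qquad \Pr[R={\le}\mid Z=0]=\tfrac{a}{k},
\]
so the two conditional response distributions differ by only $1/k$ in a single coordinate. The mutual information $I(R;X_i\mid Y)$ is therefore upper bounded by the concavity deficit of the binary entropy between $a/k$ and $(a+1)/k$, which is $\Theta(1/k^2)$ whenever $a/k\in[1/4,3/4]$. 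Chaining this estimate across $T$ possibly adaptive queries via the chain rule of mutual information, and invoking Fano's inequality, shows that recovering $X_i$ with constant error requires $\Omega(k^2\log n)$ queries placed inside cluster $i$. Since at least $k/2$ clusters satisfy $a/k\in[1/4,3/4]$ and queries in disjoint clusters cannot be reused, summing yields the claimed $\Omega(k^3\log n)$ total. The main technical obstacle I expect is ensuring that the per-query information bound survives adaptive querying; this is handled by conditioning on the full query-response transcript at each step before applying the chain rule. A parallel construction for $k\ge n$---fixing the $n$ singleton clusters and randomising the multiplicity profile $(m_1,\ldots,m_n)$---yields the lower bound $\Omega(k^2 n\log n)$ by an analogous concavity-deficit calculation summed over the $n$ coordinates.

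For the matching upper bound I would run a Feige--Raghavan--Peleg--Upfal-style noisy binary search to locate the rank-$i$ element of $S$ for each $i=1,\ldots,k$. The noisy comparison needed at each step of the search asks whether $|\{j:X_j\le Y\}|\ge i$; this reduces to estimating the unknown fraction $|\{j:X_j\le Y\}|/k$ to additive precision $1/(2k)$, which a Chernoff bound achieves from $\Theta(k^2)$ repetitions of the query at $Y$ with constant failure probability. Plugging these boosted bits into the FRPU routine locates a single rank with error $\de$ in $O(k^2\log(n/\de))$ queries, and doing so for all $k$ ranks gives $O(k^3\log(n/\de))$ overall---which is $O(k^3\log n)$ whenever $\de=n^{-O(1)}$. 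In the regime $k\ge n$ one instead estimates the $n$-dimensional multiplicity vector coordinate by coordinate to additive precision $1/k$, at cost $O(k^2 n\log(n/\de))$. Combined with the lower bounds, this establishes Theorem~\ref{thm:main1}.
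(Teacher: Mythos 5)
Your upper bound is essentially the paper's: $\Theta(k^2)$ repetitions per comparison to estimate the rank $|\{j:X_j\le Y\}|$ to precision $1/(2k)$, fed into an FRPU-style noisy binary search costing $O(k^2\log(n/\de))$ per rank, hence $O(k^3\log(n/\de))$ overall, and direct estimation of all $n$ positions when $k\ge n$. Your lower bound for small $k$ also matches the paper's information-theoretic argument (clusters of width $n/k$, per-query information $O(1/k^2)$ for the middle $k/2$ clusters, chain rule plus Fano).

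However, there is a genuine gap in your lower bound for large $k$. You assert that each $X_i$ carries entropy $\log(n/k)=\Theta(\log n)$, but this fails once $k$ approaches $n$: for $k=n/2$ the per-cluster entropy is $1$ bit, so your argument only yields $\Omega(k^2)$ queries per cluster and $\Omega(k^3)$ in total, losing the $\log n$ factor entirely. The same defect afflicts your sketch for $k\ge n$: with constant entropy per coordinate, the "analogous concavity-deficit calculation" gives at best $\Omega(k^2 n)$, not $\Omega(k^2 n\log n)$. The paper handles the regime $k>\sqrt{n}$ with a fundamentally different argument that your proposal is missing. It places one of $\{2i,2i+1\}$ in $S$ for each of $k/2$ (resp.\ $n/2$) bins of size two and observes that to succeed on \emph{all} bins simultaneously with constant probability, the maximum-likelihood error $\ve_i$ on each bin must satisfy $\sum_i\ve_i=O(1)$, i.e.\ each bin must be decided with error roughly $1/k$ rather than merely constant error. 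A Bayes-ratio computation shows $\ve_i\ge e^{-O(d_i/k)}$ where $d_i$ is the observed deviation from the a-priori expected count, so by convexity $\sum_i d_i=\Omega(k^2\log k)$; a supermartingale and the optional stopping theorem then convert this required deviation into $\Omega(k^3\log k)$ expected queries (and $\Omega(k^2 n\log n)$ when $k>n$). The extra $\log$ factor thus comes from the union-bound-like requirement of simultaneous correctness across bins, not from per-coordinate entropy, and a pure mutual-information accounting cannot recover it. You would need to supply an argument of this kind (or restrict your information-theoretic claim to $k\le\sqrt n$, where $\log(n/k)\ge\tfrac12\log n$, and prove the remaining regimes separately) for the theorem as stated.
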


Note that the distinction between $k<\sqrt{n}$ and $\sqrt{n}<k<n$ only comes up in the analysis, but (asymptotically) makes no difference in the result. 

\begin{remark}
Because of the way the algorithms work, Theorem~\ref{thm:main1} remains true even if the comparisons in the query answers are themselves noisy, 
and output the correct value of $X_i\stackrel{?}{>}Y$ correctly only with probability $1/2+\gamma$ for some constant $\gamma>0$. 
\end{remark}

\begin{remark}
Somewhat surprisingly, same bounds hold for a fairly broad range of error parameters. In particular, the lower bound holds even when the error is constant, while the upper bound
holds even for polynomially small errors (the constant in the $\Ta(\cdot)$ may depend on the constant $\beta$ in $\delta = n^{-\beta}$). 
\end{remark}

\section{The lower bounds}

%We start with the more interesting and natural regime where the size of  $S$ is at most $n$. Note that since $S$ is a multi-set, it is also meaningful 
%to talk about the regime when $k>n$ which we address in Section \ref{sec:s5}. Note that regular binary search is the special case 
%$k=1$, and in this case the tight bound on the number of queries is $\Ta(\log n)$. We show the following general results:

We begin with showing the lower bound. In fact, we break the lower bound into two regimes: $k\le \sqrt{n}$ and $k>\sqrt{n}$. In 
the former regime, we use information-theoretic techniques to show the lower bound. In the latter, we give a more straightforward
proof of the $\Omega(k^3\log k)$ lower bound when $k<n$, and $\Om(k^2 n \log n)$ when $k>n$. The  $\Omega(k^3\log k)$
lower bound is weaker in general than  $\Omega(k^3\log n)$ when $k<n$, but is equivalent in the regime where $k>\sqrt{n}$.

\subsection{The case $k\le \sqrt{n}$: an information-theoretic lower bound} \label{sec:lowerbound}

The main technical ingredient in the lower bound proof is the Kullback-Leibler divergence and mutual information. We first introduce these terms and the lemmas we will use. For a more thorough introduction to these, see \cite{Cover}.

The Kullback-Leibler divergence (KL-divergence) measures the difference between two probability distributions:
\begin{mydef} 
For discrete random variables $P$ and $Q$ over sample space $\Omega$, the KL-Divergence is defined as: $$D_{KL}(P||Q) = \sum_{i \in \Omega} P(i) \log{\frac{P(i)}{Q(i)}}$$ with the convention that the term in the sum is interpreted as 0 when $P(i) = 0$ and $+\infty$ when $P(i) > 0$ and $Q(i) = 0$ 
\end{mydef}

We also use mutual information, which we define and arrange into a form we will use:
\begin{mydef}
Mutual information is a measure of the correlation between two random variables. The more independent the variables are, the lower the mutual information is. 
$$I(X;Y) = D_{KL}(p(x,y)||p(x)p(y))$$
\end{mydef}

Before we rearrange this definition into a form we will use, we first note (from \cite{Cover}) that it can also be written in terms of the more familiar Shannon entropy as:
$$I(X;Y) = H(X) - H(X|Y).$$
Since $H(X) \geq H(X|Y)$, $I(X;Y) \geq 0$. If entropy is interpreted as the uncertainty regarding a probability distribution, we see that the mutual information between $X$ and $Y$ represents the reduction in uncertainty of $X$ by knowing $Y$. 

We now return to the original definition given for mutual information. Using the definition of the KL-divergence and conditional probability ($p(x|y) = \frac{p(x,y)}{p(y)}$), we have: 
\begin{align*}
I(X;Y) &= \sum_y p(y) \sum_x p(x|y)\log{\frac{p(x|y)}{p(x)}} \\
       &= \sum_y p(y) D_{KL}(p(x|y)||p(x)) \\
       &= E_Y [D_{KL}(p(x|y)||p(x))]
\end{align*}
Thus we see that the mutual information  is the expectation of the KL-divergence between the probability distribution of $X$ and the probability distribution of $X$ conditioned on $Y$. If these two distributions have a high KL-divergence, then knowing $Y$ provides us a high amount of information regarding the probability distribution of $X$. This is equivalent to saying that the mutual information of $X$ and $Y$ is high.

We will use the chain rule for mutual information: 
\begin{lemma} \label{lem:mutinfochain}
$I(X;Y_1, Y_2, \ldots, Y_k) = I(X;Y_1) + I(X;Y_2|Y_1) + \ldots + I(X;Y_k|Y_{k-1}, \ldots, Y_2, Y_1)$
\end{lemma}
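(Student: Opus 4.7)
The plan is to reduce the chain rule for mutual information to a telescoping sum of conditional entropies of $X$, using the identity $I(A;B) = H(A) - H(A \mid B)$ (and its conditional analog $I(A; B \mid C) = H(A \mid C) - H(A \mid B, C)$) that the excerpt has just recalled from \cite{Cover}. This keeps the bookkeeping minimal and, in particular, avoids having to unpack the joint entropy $H(Y_1, \ldots, Y_k)$ at all.

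Concretely, I would first rewrite each summand on the right-hand side as
$I(X; Y_i \mid Y_{i-1}, \ldots, Y_1) = H(X \mid Y_{i-1}, \ldots, Y_1) - H(X \mid Y_i, Y_{i-1}, \ldots, Y_1),$
adopting the convention $H(X \mid \emptyset) = H(X)$ so that the $i = 1$ term reads $H(X) - H(X \mid Y_1)$. To justify this conditional identity I would apply the KL-divergence definition of mutual information to the distributions $p(x \mid y_1, \ldots, y_{i-1})$ and $p(x \mid y_1, \ldots, y_i)$, and then take expectation over the conditioning variables; this is essentially the same algebraic rearrangement the excerpt already performs for the unconditional case.

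Summing these identities from $i = 1$ to $k$ then yields a telescoping sum: every intermediate conditional entropy $H(X \mid Y_j, \ldots, Y_1)$ for $1 \le j \le k-1$ appears once with a plus sign (from the $i = j+1$ term) and once with a minus sign (from the $i = j$ term), so all that survives is $H(X) - H(X \mid Y_1, \ldots, Y_k)$. Applying the entropy form of mutual information one last time, with the single composite variable $(Y_1, \ldots, Y_k)$ in the second slot, this equals $I(X; Y_1, \ldots, Y_k)$, which is exactly what the lemma claims.

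I do not anticipate any real obstacle; once the conditional identity is in hand the argument is purely formal. The only substantive choice is whether to telescope on the $X$-side conditional entropies or on the $Y_i$-side ones, and the former is cleaner because it matches the asymmetric way the chain rule is written in the statement (with $X$ as the distinguished variable) and requires no manipulation of $H(Y_1, \ldots, Y_k)$ or $H(Y_1, \ldots, Y_k \mid X)$ at any intermediate step.
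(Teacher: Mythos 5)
Your proposal is correct. Note, however, that the paper does not prove this lemma at all --- it simply cites Cover and Thomas --- so there is no in-paper argument to compare against; your write-up would serve as a self-contained substitute. Your route (telescoping the conditional entropies $H(X \mid Y_{i-1},\ldots,Y_1)$ of the distinguished variable $X$) is the mirror image of the textbook proof in Cover--Thomas, which writes $I(X;Y_1,\ldots,Y_k) = H(Y_1,\ldots,Y_k) - H(Y_1,\ldots,Y_k \mid X)$ and expands both joint entropies via the chain rule for entropy. The two are equivalent by the symmetry of mutual information; your choice has the minor advantage you identify of never touching the joint entropy of the $Y_i$'s, at the cost of needing the conditional identity $I(X;Y_i \mid Y_{i-1},\ldots,Y_1) = H(X \mid Y_{i-1},\ldots,Y_1) - H(X \mid Y_i,\ldots,Y_1)$, which you justify adequately by conditioning the KL-divergence formulation already derived in the text. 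Either way the argument is purely formal and complete.
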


For a proof of the above lemma, see \cite{Cover}. We are now done defining the information theory terms we will need. Lastly, we will need the following lemma which describes the KL-divergence between two Bernoulli random variables with a similar probability of success:
\begin{lemma} \label{lem:berndiv}
$D_{KL}(B_{p \pm \varepsilon}||B_p) = O(\varepsilon^2)$ where $B_p$ is a Bernoulli random variable with probability of success $p$, $\frac{1}{4} \leq p \leq \frac{3}{4}$ and $\varepsilon \leq \frac{1}{8}$.
\end{lemma}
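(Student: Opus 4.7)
The plan is a direct Taylor expansion. Writing $q = p+\varepsilon$ (the $-\varepsilon$ case is symmetric), by the definition of KL-divergence applied to the two-point sample space $\{0,1\}$,
$$D_{KL}(B_{p+\varepsilon}\|B_p) = (p+\varepsilon)\log\frac{p+\varepsilon}{p} + (1-p-\varepsilon)\log\frac{1-p-\varepsilon}{1-p}.$$
Set $a = \varepsilon/p$ and $b = \varepsilon/(1-p)$. The hypotheses $p \in [1/4, 3/4]$ and $\varepsilon \le 1/8$ give $|a|,|b| \le 1/2$, so the expansion $\log(1+x) = x - x^2/2 + O(x^3)$ is valid with absolute constants in the remainder.

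Substituting $\log(1+a) = a - a^2/2 + O(a^3)$ into the first term and $\log(1-b) = -b - b^2/2 + O(b^3)$ into the second, I would expand and collect by order in $\varepsilon$. The key observation is that the linear-in-$\varepsilon$ contributions cancel exactly: this is the standard fact that $\partial_q D_{KL}(B_q\|B_p)\big|_{q=p} = 0$, so the leading nonzero term is quadratic. Explicit computation gives
$$D_{KL}(B_{p+\varepsilon}\|B_p) = \frac{\varepsilon^2}{2p(1-p)} + O(\varepsilon^3),$$
where the implicit constant in the $O(\varepsilon^3)$ term depends only on the bounds $|a|,|b| \le 1/2$.

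To finish, I use the fact that on $p \in [1/4, 3/4]$ we have $p(1-p) \ge 3/16$, so the leading term is bounded by $\tfrac{8}{3}\varepsilon^2$, and the higher-order remainder is dominated by a further absolute constant times $\varepsilon^3 \le \varepsilon^2/8$. Combining these gives the desired $O(\varepsilon^2)$ bound, with the same argument verbatim for $B_{p-\varepsilon}$.

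There is essentially no obstacle in this proof; the only care needed is bookkeeping to confirm the cancellation of first-order terms and to check that the stated ranges on $p$ and $\varepsilon$ keep $\varepsilon/p$ and $\varepsilon/(1-p)$ uniformly bounded away from $1$ so that the Taylor expansion has a uniform error bound. (One could alternatively cite the standard Pinsker-type inequality $D_{KL}(B_q\|B_p) \le (q-p)^2/[p(1-p)\ln 2]$ and be done in one line, but the expansion above is self-contained.)
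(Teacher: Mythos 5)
Your proof is correct. The computation checks out: with $a=\varepsilon/p$ and $b=\varepsilon/(1-p)$ the hypotheses indeed force $|a|,|b|\le 1/2$, the first-order terms cancel, and the expansion yields $\frac{\varepsilon^2}{2p(1-p)}+O(\varepsilon^3)$ with $p(1-p)\ge 3/16$ on the stated range, so the $O(\varepsilon^2)$ bound follows. The paper reaches the same conclusion by a slightly different route: rather than Taylor-expanding and tracking a remainder, it algebraically rearranges the two-term KL expression into $\log\bigl(\bigl(1+\tfrac{\varepsilon}{p}\bigr)^p\bigl(1-\tfrac{\varepsilon}{1-p}\bigr)^{1-p}\bigr)+\varepsilon\log\bigl(1+\tfrac{\varepsilon}{p(1-p-\varepsilon)}\bigr)$ exactly, then kills the first summand with the inequalities $1+x\le e^x$ and $1-x\le e^{-x}$ (it collapses to $\log e^0=0$) and bounds the second by $\tfrac{32\varepsilon^2}{3\ln 2}$. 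The paper's argument buys a clean explicit constant with no remainder bookkeeping; yours makes the underlying structure more transparent (the vanishing first derivative of $q\mapsto D_{KL}(B_q\|B_p)$ at $q=p$ and the $\tfrac{1}{2p(1-p)}$ curvature), and your one-line alternative via the $\chi^2$-divergence bound $D_{KL}(B_q\|B_p)\le (q-p)^2/(p(1-p)\ln 2)$ is also a perfectly valid shortcut. Either version suffices for the way the lemma is used downstream, where only the order $\varepsilon^2$ matters.
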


\begin{proof}
Here we prove the $plus$ part of the lemma ($D_{KL}(B_{p + \varepsilon}||B_p) = O(\varepsilon^2)$). The $minus$ part is nearly identical and is thus excluded.
\begin{align*}
D_{KL}(B_{p + \varepsilon}||B_p) &= (p + \varepsilon)\log{\left(\frac{p + \varepsilon}{p}\right)} + (1 - p - \varepsilon)\log{\left(\frac{1 - p - \varepsilon}{1 - p}\right)} \\
&=\log{\left(\frac{1-p-\varepsilon}{1-p}\right)} + p\log{\left(\left(\frac{p+\varepsilon}{p}\right)\left(\frac{1-p}{1-p-\varepsilon}\right)\right)} + \\ &\qquad\qquad\qquad\varepsilon\log{\left(\left(\frac{p+\varepsilon}{p}\right)\left(\frac{1-p}{1-p-\varepsilon}\right)\right)} \\
&=\log{\left(\frac{1-p-\varepsilon}{1-p}\right)} + \log{\left(\left(\frac{p+\varepsilon}{p}\right)^p\left(\frac{1-p-\varepsilon}{1-p}\right)^{-p}\right)} + \\ &\qquad\qquad\qquad\varepsilon\log{\left(\left(\frac{p-p^2-p\varepsilon+\varepsilon}{p-p^2-p\varepsilon}\right)\right)} \\
&=\log{\left(\left(1+\frac{\varepsilon}{p}\right)^p \left(1-\frac{\varepsilon}{1-p}\right)^{1-p}\right)} + \varepsilon\log{\left(1+\frac{\varepsilon}{p(1-p-\varepsilon)}\right)} \\
\intertext{Use the inequalities $1+x \leq e^x$ and $1-x \leq e^{-x}$:}
D_{KL}(B_{p + \varepsilon}||B_p) &\leq \log{\left(e^{\frac{\varepsilon}{p}p} e^{\frac{-\varepsilon}{1-p} (1-p)}\right)} + \varepsilon\log{e^{\frac{\varepsilon}{p(1-p-\varepsilon)}}} \\
&= \log_2{e^0} + \frac{\varepsilon}{\ln{2}} \frac{\varepsilon}{p(1-p-\varepsilon)} \\
\intertext{since $\frac{1}{4} \leq p \leq \frac{3}{4}$ and $\varepsilon \leq \frac{1}{8}$, $p(1-p-\varepsilon) \geq \frac{3}{4}\left(1-\frac{3}{4}-\frac{1}{8}\right) = \frac{3}{32}$:}
D_{KL}(B_{p + \varepsilon}||B_p) &\leq 0 + \frac{32\varepsilon^2}{3\ln{2}} \\
&= O(\varepsilon^2)  \qedhere
\end{align*} 
\end{proof} 

We are now ready to begin our proof of the lower bound. The approach taken is to show that the information gain from each query is small compared with the total information required to find a certain chosen integer. This will allow us to show that a certain minimum number of queries is required to find each of the $k$ integers.

\begin{lemma} 
The lower bound for the number of queries required to find the $k$ integers  between $1$ and $n$ in the set $S$ with probability $>0.99$, when $8 \leq k\le \sqrt{n}$, is $\Omega(k^3\log{n})$
\end{lemma}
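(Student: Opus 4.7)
The plan is to construct a hard input distribution and combine Fano's inequality with the chain rule for mutual information. I would partition $[1,n]$ into $k$ consecutive blocks of length $n/k$ and draw the chosen integers independently with $X_j$ uniform on block $j$. By symmetry, an algorithm succeeding with probability $\geq 0.99$ must recover each $X_j$ with probability $\geq 0.99$ (the unique element of $\hat S$ in block $j$ is the natural estimator for $X_j$). Fano's inequality then gives $H(X_j \mid R_{1:T}) \leq 1 + 0.01 \log(n/k)$, and since further conditioning only decreases entropy, writing $X_{-j}$ for the remaining chosen integers,
\[
I(X_j; R_{1:T} \mid X_{-j}) \;\geq\; H(X_j \mid X_{-j}) - H(X_j \mid R_{1:T}) \;=\; \Omega(\log n),
\]
where the final step uses $k \leq \sqrt{n}$ so that $\log(n/k) \geq \tfrac12 \log n$.

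Next I would expand this via the chain rule (Lemma~\ref{lem:mutinfochain}) into $\sum_t I(X_j; R_t \mid R_{1:t-1}, X_{-j})$ and bound each term. Once we condition on $X_{-j}$ and $R_{1:t-1}$, both the (adaptively chosen) query $Y_t$ and the count $c_t := \sum_{i\neq j} \mathbf{1}[X_i \leq Y_t]$ are fixed. If $Y_t$ lies outside block $j$, then $\mathbf{1}[X_j \leq Y_t]$ is constant on all of block $j$, so the conditional law of $R_t$ does not depend on $X_j$ and the contribution is $0$. If $Y_t$ lies inside block $j$, the block structure forces $c_t = j-1$ (each left block contributes $1$, each right block contributes $0$), hence $R_t \mid X_j \sim \mathrm{Bern}\bigl((j-1 + \mathbf{1}[X_j \leq Y_t])/k\bigr)$, whose marginal in $X_j$ is $\mathrm{Bern}\bigl((j-1+q_t)/k\bigr)$ with $q_t := P(X_j \leq Y_t \mid X_{-j}, R_{1:t-1}) \in [0,1]$.

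These two Bernoulli parameters differ by at most $1/k$, and restricting attention to the middle blocks $j \in [3k/8,\, 5k/8]$ guarantees the reference parameter $(j-1+q_t)/k$ lies in $[1/4, 3/4]$ once $k \geq 8$, so Lemma~\ref{lem:berndiv} bounds the per-query information by $O(1/k^2)$ whenever $Y_t$ is in block $j$. Letting $T_j$ denote the expected number of queries landing in block $j$, summing the chain-rule terms yields $O(1/k^2) \cdot T_j \geq \Omega(\log n)$, i.e., $T_j = \Omega(k^2 \log n)$. Since the middle blocks are disjoint and there are $\Omega(k)$ of them, the total expected number of queries is $\sum_j T_j \geq \Omega(k^3 \log n)$.

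The main subtlety I anticipate is handling the chain rule under an adaptive, possibly randomized algorithm: conditioning on $X_{-j}$ rather than only on past responses is essential, since only then is $c_t$ deterministic given the block containing $Y_t$, ruling out the possibility that past responses leak information about $X_j$ through posterior correlations with $X_{-j}$. Everything else reduces to taking expectations of the pointwise Bernoulli-KL bound, which is routine once the regime for $p$ is pinned down to the middle quarter of clusters.
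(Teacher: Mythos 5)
Your proof is correct and follows essentially the same route as the paper's: the same one-chosen-element-per-cluster hard distribution, the entropy lower bound combined with the chain rule for mutual information, the per-query $O(1/k^2)$ bound via Lemma~\ref{lem:berndiv} after restricting to middle clusters so that $p \in [1/4,3/4]$, and the observation that in-block queries for distinct clusters are disjoint. Your refinements --- invoking Fano's inequality in place of the paper's informal ``mass concentrated on one point'' step, conditioning on $X_{-j}$ throughout, and explicitly assigning zero information to queries landing outside block $j$ --- merely tighten steps the paper treats more casually.
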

\begin{proof}
We choose our input as follows. Split the integers in the range $[1,n]$ into $k$ equally sized clusters. Call these clusters $G_1, G_2, \ldots, G_k$. Let there be one of the $k$ chosen integers in each such cluster. This integer is chosen uniformly at random from the integers in the cluster. Note that the number of integers in each cluster is $\frac{n}{k}$, which, without loss of generality, we will assume is an integer. See Figure \ref{fig:lowerbound} for a visualization of this.

\begin{figure}[ht]
	\centering
		\includegraphics[scale=0.5]{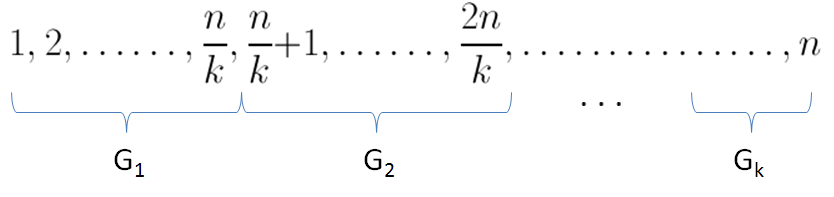}
	\caption{Visualization of our partition of the integers between $1$ and $n$}
	\label{fig:lowerbound}
\end{figure} 

We consider individually a cluster $G_i$ where $\frac{k+4}{4} \leq i \leq \frac{3k}{4}$. Let $L$ be the random variable that represents the chosen integer in $G_i$. Since this number is chosen uniformly at random from $\frac{n}{k}$ elements, the probability of each integer being the chosen integer is $P(x) = \frac{1}{\frac{n}{k}}=\frac{k}{n}$. Therefore, the entropy of $L$ is $H(L) = \sum_{x} P(x)\log{\frac{1}{P(x)}} = \sum_{i = 1}^{\frac{n}{k}} \frac{k}{n} \log{\frac{n}{k}} =  \log{\frac{n}{k}}$. We now define $Q_j$ to be a Bernoulli random variable representing the response to the $j^{th}$ query (i.e. either `$\leq$' or `$>$'). We need to make enough queries so that the information gain relevant to $L$ is close to the entropy of $L$ in order to determine the chosen number in $G_i$ with a high degree of accuracy. This is equivalent to saying that the mutual information between $L$ and the queries made $Q_1, Q_2, \ldots, Q_l$ is at least a constant times the entropy of $L$. Indeed, in the end, we must have determined the point with probability greater than $0.99$. Therefore, conditioned on the queries, most of the mass is concentrated on one point and $H(L|Q_1,\ldots,Q_l) < 0.2 \log{\frac{n}{k}}$. Therefore, $I(L;Q_1,\ldots,Q_l) = H(L) - H(L|Q_1,\ldots,Q_l) = \Om(\log{\frac{n}{k}})$. Thus, we need: 
\begin{equation} \label{eq:mutinfoentropy}
I(L;Q_1, Q_2, \ldots, Q_l) \geq \Om(\log{\frac{n}{k}}),
\end{equation}
where $l$ is the number of queries made. We want to find the minimum $l$ for which this is true. First, we use Lemma \ref{lem:mutinfochain} (chain rule) to write:
\begin{equation} \label{eq:mutinfoLandQ} I(L;Q_1, Q_2, \ldots, Q_l) = I(L;Q_1) + I(L;Q_2|Q_1) + \ldots + I(L;Q_l|Q_{l-1},\ldots,Q_2,Q_1).
\end{equation}

Take one of these terms and recall that we can express mutual information in terms of KL-divergence:
$$I(L;Q_j|Q_{j-1},\ldots,Q_1) = E_Q [D_{KL}(p(Q_j|L,Q_{j-1},\ldots,Q_1)||p(Q_j|Q_{j-1},\ldots,Q_1))]$$ where $1 \leq j \leq l$. Thus, we need to find the KL-divergence of $Q_j|L,Q_{j-1},\ldots,Q_1$ and of $Q_j|Q_{j-1},\ldots,Q_1$. We note that since we chose cluster $G_i$, there are $i-1$ of the $k$ chosen integers that are smaller and $k-i$ of the $k$ numbers that are bigger than any element of $G_i$. Therefore, for both probability distributions, the probability that the response is `$\leq$' is at least $\frac{i-1}{k}$ and the probability that the response is `$>$' is at least $\frac{k-i}{k}$. Therefore, both probability distributions are Bernoulli with probability of success (taking success to be the response `$\leq$') between $\frac{i-1}{k}$ and $1-\frac{k-i}{k} = \frac{i}{k}$. Thus, the difference in probabilities of success of the two distributions is at most $\frac{i}{k} - \frac{i-1}{k} = \frac{1}{k}$. Then if we let $Q_j|L,Q_{j-1},\ldots,Q_1$ be $B_p$ and let $Q_j|Q_{j-1},\ldots,Q_1$ be $B_{p \pm \varepsilon}$, we know $\frac{1}{4} \leq p \leq \frac{3}{4}$ (because $\frac{k+4}{4} \leq i \leq \frac{3k}{4}$) and $0 \leq \varepsilon \leq \frac{1}{k}$ (because this is the maximum difference in probability of success between the two distributions). By lemma \ref{lem:berndiv}, $D_{KL}(p(Q_j|L,Q_{j-1},\ldots,Q_1)||p(Q_j|Q_{j-1},\ldots,Q_1)) = O(\varepsilon^2) = O(\frac{1}{k^2})$. So: $E_Q [D_{KL}(p(Q_j|L,Q_{j-1},\ldots,Q_1)||p(Q_j|Q_{j-1},\ldots,Q_1))] = O(\frac{1}{k^2})$ and we have: $$I(L;Q_j|Q_{j-1},\ldots,Q_1) = O\left(\frac{1}{k^2}\right).$$

Returning to equation \ref{eq:mutinfoLandQ}: 
\begin{align*}
I(L;Q_1, Q_2, \ldots, Q_l) &= \sum_{j=1}^l I(L;Q_j|Q_{j-1},\ldots,Q_1) \\
&= O\left(l \frac{1}{k^2}\right)
\end{align*}

From  \eqref{eq:mutinfoentropy}, we have $O(l \frac{1}{k^2}) \geq \Om(\log{\frac{n}{k}})$ so $$l = \Omega\left(k^2 \log{\frac{n}{k}}\right) = \Omega(k^2 \log{n})$$ since $k \leq \sqrt{n}$. This is the minimum number of queries to find the chosen integer in $G_i$. This holds in total for $\frac{3k}{4} - \frac{k + 4}{4} + 1= \frac{k}{2}$ of the $k$ chosen numbers (this is the number of clusters $G_i$ with $i$ in the range we considered). Note that to find the chosen number in $G_i$, queries made in determining the number within $G_j$ with $j \neq i$ provide no information for determining the number in $G_i$ (as all queries are either bigger or smaller than all the numbers in $G_i$). Then finding $\frac{k}{2}$ of the $k$ chosen numbers requires at least $\Omega\left(\frac{k}{2} k^2 \log{n}\right) = \Omega(k^3 \log{n})$ time. Therefore, finding all $k$ of the chosen numbers requires at least ${\Omega(k^3 \log{n})}$ queries. \qedhere
\end{proof}

\subsection{The lower bound when $k>\sqrt{n}$}

Next we turn our attention to the lower bound in the regime where $k>\sqrt{n}$. We start with the case $\sqrt{n}<k\le n - 2$, as the case $k>n - 2$ is treated very similarly. 
The multi-set $S$ is constructed as follows: we place $k/4$ $1$'s and $k/4$ $n$'s in $S$. Partition the rest of the set $\{1,\ldots,n\}$ into bins $B_1=\{2,3\}$, $B_2=\{4,5\}$, etc. 
For each bin $B_i$ for $i=1,2,\ldots,k/2$, we place exactly one of the elements of $B_i$ in $S$ independently and uniformly at random. We now look at the process of determining which element of $B_i$ has
been selected using the queries. Note that only the query with $Y=2i$ carries any information on which element of $B_i$ has been selected. Thus a set of observations
can be specified by a set of pairs of numbers $\{(l_i,h_i)\}_{i=1}^{k/2}$ where $l_i$ represents the number of times we queried $Y=2i$ and received the `$\le$' answer, 
and $h_i$ represents the number of times we received the `$>$' answer. The probability of each answer is between $1/4$ and $3/4$, and varies by $1/k$ depending on 
whether we selected $2i$ or $2i+1$ in $B_i$. 

When we output the set $S$, we need to make $k/2$ decisions of whether to output $2i$ or $2i+1$ for each $B_i$. Each of these decisions should depend only on the 
values of $(l_i,h_i)$, and should maximize the probability that the output is correct. This can only be done by outputting the maximum likelihood value for each $B_i$. 
More precisely, we should output $2i$ if $\frac{l_i}{l_i+h_i} > \frac{k/4+i-1/2}{k}$, and $2i+1$ otherwise. We are not particularly concerned with these details, but 
only with the probability that our output is wrong. Denote by $\ve_i>0$ the probability that the maximum-likelihood output given $(l_i,h_i)$ is incorrect. 
We first claim that to have a probability of $>0.9$ to be correct in outputting $S$, we must have a bound on the sum of the $\ve_i$'s. 

\begin{claim} \label{cl:1}
If given the values  $\{(l_i,h_i)\}_{i=1}^{k/2}$ the output $S$ is correct with probability $>0.5$, then 
$\sum_{i=1}^{k/2} \ve_i< 1$.
\end{claim}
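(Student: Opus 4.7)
The plan is to show that, conditioned on the observed counts $\{(l_i,h_i)\}_{i=1}^{k/2}$, the posterior distributions over the $k/2$ bin-elements $c_i\in B_i$ are mutually independent, so that the event ``output $S$ is correct'' factors over the bins. Then $\Pr[\text{correct}\mid \{(l_i,h_i)\}]=\prod_{i=1}^{k/2}(1-\ve_i)$, and the hypothesis $\prod(1-\ve_i)>1/2$ translates into a bound on $\sum\ve_i$ via the elementary inequality $1-x\le e^{-x}$.

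First I would establish the conditional independence. A priori, the $c_i$'s are independent, each uniform on $B_i=\{2i,2i+1\}$. Given the algorithm's internal randomness and the transcript of answers, any individual answer to a query $Y\neq 2i$ is determined (in distribution) by the numbers of elements of $S$ that lie below or above $Y$, which do not depend on $c_i$; only queries with $Y=2i$ are affected by $c_i$, and only through the counts $(l_i,h_i)$ (this is exactly the ``sufficient statistic'' remark from the paragraph preceding the claim). Thus the likelihood of the full transcript given $(c_1,\ldots,c_{k/2})$ factorizes as $\prod_i P\bigl((l_i,h_i)\mid c_i\bigr)$; combined with the product prior, Bayes' rule yields that, conditioned on $\{(l_i,h_i)\}$, the $c_i$'s remain independent, with the posterior for $c_i$ depending only on $(l_i,h_i)$. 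Adaptivity of the algorithm does not spoil this, because the algorithm's choices are measurable in the transcript and its own random bits, not in the hidden $c_i$'s.

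By definition, $\ve_i$ is the conditional probability that the maximum-likelihood decision for bin $B_i$ (given $(l_i,h_i)$) is wrong. The output $S$ is correct iff the decision in every bin is correct, so by the conditional independence just established,
\begin{equation*}
\Pr\bigl[\text{output correct}\mid \{(l_i,h_i)\}\bigr]=\prod_{i=1}^{k/2}(1-\ve_i).
\end{equation*}
The hypothesis is that this product exceeds $1/2$. Applying $1-x\le e^{-x}$ termwise gives
\begin{equation*}
\tfrac{1}{2}<\prod_{i=1}^{k/2}(1-\ve_i)\le \exp\Bigl(-\sum_{i=1}^{k/2}\ve_i\Bigr),
\end{equation*}
so $\sum_i\ve_i<\ln 2<1$, as claimed. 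The only non-routine step is the conditional independence argument; once that is in hand, the rest is a one-line inequality.
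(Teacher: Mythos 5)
Your proposal is correct and follows essentially the same route as the paper: independence of the per-bin correctness events gives $\Pr[\text{correct}]=\prod_i(1-\ve_i)>1/2$, and $1-x\le e^{-x}$ yields $\sum_i\ve_i<\ln 2<1$. The only difference is that you spell out the conditional-independence/sufficient-statistic justification that the paper simply asserts, which is a welcome addition but not a different argument.
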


\begin{proof}
Since the events of being correct on each $B_i$ are independent, the probability of being correct on all $B_i$'s is given by
$$
0.5<\prod_{i=1}^{k/2} (1-\ve_i) < e^{-\sum_{i=1}^{k/2} \ve_i}, 
$$
which implies the statement of the claim. 
\end{proof}

Next, let us denote by $\mu_i$ the {\em a-priori} expected number of `$\le$' responses on $l_i+h_i$ queries, and let $d_i:=|l_i-\mu_i|$ be
the observed deviation from this expected value. Intuitively, the greater this deviation, the greater is our confidence in the answer. 
In fact, it is not hard to formalize this intuition:

\begin{claim} \label{cl:2}
For each $i$, and $k>25$, $\ve_i> e^{-10 d_i/k}/3$. 
\end{claim}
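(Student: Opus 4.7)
The plan is to express $\varepsilon_i$ via the likelihood ratio between the two hypotheses for $B_i$'s chosen element, and then bound that ratio linearly in $d_i$ using a Taylor expansion.

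Setup: let $a := k/4 + i$ and $b := k - a$. The probability of a `$\le$' answer at $Y = 2i$ equals $p_1 = a/k$ if $B_i$'s element is $2i$ and $p_0 = (a-1)/k$ if it is $2i+1$, so $p_1 - p_0 = 1/k$ and $\mu_i = n_i(a-1/2)/k = n_i(p_0+p_1)/2$. Writing $L(p) := p^{l_i}(1-p)^{h_i}$ and $R := L(p_0)/L(p_1)$, the Bayes-optimal (MLE) error under the uniform prior on $\{2i,2i+1\}$ is
\[
\varepsilon_i \;=\; \frac{\min(L(p_0),L(p_1))}{L(p_0)+L(p_1)} \;\ge\; \tfrac{1}{2}\min(R, 1/R) \;=\; \tfrac{1}{2}e^{-|\ln R|}.
\]
So it suffices to prove $|\ln R| \le 10\, d_i/k + \ln(3/2)$, since this then gives $\varepsilon_i \ge (2/3)\cdot(1/2)\cdot e^{-10 d_i/k} = e^{-10 d_i/k}/3$.

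To bound $|\ln R|$, I would first write
\[
\ln R \;=\; l_i\,\ln\!\bigl(1 - 1/a\bigr) + h_i\,\ln\!\bigl(1 + 1/b\bigr),
\]
which is a linear function of $l_i$ (with $h_i = n_i - l_i$) whose slope equals $\ln(1-1/a) - \ln(1+1/b) \approx -(1/a+1/b)$. Substituting $l_i = \mu_i + \delta_i$, $h_i = n_i - \mu_i - \delta_i$, Taylor-expanding $\ln(1\pm x) = \pm x + O(x^2)$, and using $a+b = k$ together with the explicit form $\mu_i = n_i(a-1/2)/k$, the $\delta_i$-independent terms in the first two Taylor orders cancel exactly. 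What survives is $\ln R = -\delta_i\,(1/a+1/b) + (\text{lower order in }1/k)$.

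Finally, since $1\le i\le k/2$ forces $a,b\ge k/4$, we get $1/a + 1/b \le 8/k$, so the leading contribution to $|\ln R|$ is at most $8 d_i/k$. The hypothesis $k>25$ ensures that the remaining Taylor residuals (terms in $1/a^2,\,1/b^2,\ldots$) are small enough to be absorbed into the gap between $8/k$ and $10/k$ together with the additive constant $\ln(3/2)$, yielding $|\ln R| \le 10 d_i/k + \ln(3/2)$ and hence the claim.

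The main obstacle is the careful accounting of the lower-order Taylor remainders: the first-order cancellation around $\mu_i$ is delicate (it expresses the fact that $\mu_i$ nearly coincides with the zero of the linear function $l_i \mapsto \ln R$), and one must verify that the residual corrections fit inside the allotted slack. The explicit assumption $k > 25$ is there precisely to provide this slack.
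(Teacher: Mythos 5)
Your proposal follows essentially the same route as the paper's proof: both lower-bound $\ve_i$ by one half times the likelihood ratio of the two hypotheses (you via $\min(L(p_0),L(p_1))/(L(p_0)+L(p_1))\ge \tfrac12 e^{-|\ln R|}$, the paper via Bayes' rule), and both then argue that the log-likelihood ratio changes by only $O(1/k)$ per unit of deviation of $l_i$ from $\mu_i$ --- the paper by factoring $R$ into its value at the breakdown $(\mu_i-1,\,n_i-\mu_i+1)$ times $d_i+1$ per-unit correction factors each at least $1-5/k$, you by Taylor-expanding $\ln R$ linearly around $l_i=\mu_i$. The difference is one of bookkeeping (multiplicative versus logarithmic).

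The step you flag as the main obstacle is, however, a genuine gap and not just an accounting chore. The constant term $\ln R(\mu_i)$ does cancel at orders $n_i/k$ and $n_i/k^2$, as you say, but the surviving residual is of order $n_i/k^3$ with a generically nonzero coefficient: the decision threshold $\mu_i=n_i(p_0+p_1)/2$ is not the exact zero of $l\mapsto\ln R$ (the KL divergence is not symmetric about the midpoint of $p_0$ and $p_1$); the two points differ by $\Theta(n_i/k^2)$, and the slope of $\ln R$ is $\Theta(1/k)$. This residual is proportional to the number of queries $n_i$ made to bin $B_i$, so it cannot be absorbed into the additive slack $\ln(3/2)$ unless $n_i=O(k^3)$; for $n_i\gg k^3$ and $l_i$ near $\mu_i$ the bound $|\ln R|\le 10d_i/k+\ln(3/2)$ simply fails. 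To be fair, the paper's own proof elides exactly the same point: its assertion that the breakdown $(\mu_i-1,\,n_i-\mu_i+1)$ is more likely under $2i+1$ than under $2i$ is false once $n_i\gg k^3$. Either argument can be repaired --- e.g., by handling separately the (necessarily few, under the assumed query budget) bins that receive $\Omega(k^3)$ queries, or by carrying the $\Theta(n_i/k^3)$ residual through Claims~\ref{cl:1}--\ref{cl:3} --- but as written your "absorb the residuals into the slack" step would not go through uniformly in $n_i$.
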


\begin{proof}
Suppose wlog that $l_i>\mu_i$, and thus we are outputting $2i$. Denote $p=\frac{k/4+i-1}{k}$ and $q=\frac{3k/4-i}{k}$. We have by Bayes' rule
\begin{multline*}
\ve_i = Pr[2i+1|(l_i,h_i)] = \frac{Pr[(l_i,h_i)|2i+1]}{2Pr[(l_i,h_i)]} \ge \frac{  Pr[(l_i,h_i)|2i+1]}{2 Pr[(l_i,h_i)|2i]}= \\
\frac{p^{l_i} (q+1/k)^{h_i}}{2(p+1/k)^{l_i} q^{h_i}} = \frac{p^{\mu_i-1} (q+1/k)^{l_i+h_i-\mu_i+1}}{2(p+1/k)^{\mu_i-1} q^{l_i+h_i-\mu_i+1}}
\cdot \frac{p^{l_i-\mu_i+1} (q+1/k)^{\mu_i-l_i-1}}{(p+1/k)^{l_i-\mu_i+1} q^{\mu_i-l_i-1}} = \\
\frac{  Pr[(\mu_i-1,l_i+h_i-\mu_i+1)|2i+1]}{2 Pr[(\mu_i-1,l_i+h_i-\mu_i+1)|2i]} \cdot \left(1-\frac{1/k}{p+1/k}\right)^{d_i+1}\cdot \left(1+\frac{1/k}{q}\right)^{-d_i-1} \ge \\
(1/2)\cdot \left(1-5/k\right)^{2d_i+2} \ge e^{-(5/k)(2d_i+2)}/2 >  e^{-10 d_i/k}/3.
\end{multline*}
The second-to last inequality follows from the fact that the breakdown $(\mu_i-1,l_i+h_i-\mu_i+1)$ is more likely under the selection of $2i+1$ than under the selection of $2i$. 
\end{proof}

Putting Claims~\ref{cl:1} and \ref{cl:2} together we see that assuming the probability that the output $S$ is correct is $>0.5$, we must have
\begin{equation}\label{eq:1}
\sum_{i=1}^{k/2} e^{-10 d_i/k} < 3. 
\end{equation}

\begin{claim}\label{cl:3}
Equation \eqref{eq:1} implies $\sum_{i=1}^{k/2} d_i> \frac{k^2}{40} \ln k$, for $k>40$.  
\end{claim}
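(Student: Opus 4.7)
The plan is to rescale and reduce the claim to an elementary application of Jensen's inequality. Setting $x_i := 10 d_i/k$, the hypothesis \eqref{eq:1} becomes $\sum_{i=1}^{k/2} e^{-x_i} < 3$, while the target inequality $\sum d_i > (k^2/40)\ln k$ becomes $\sum x_i > (k/4)\ln k$. Thus the whole claim reduces to the following one-line inequality: if $(x_1,\ldots,x_{k/2})$ are real numbers with $\sum e^{-x_i} < 3$ and $k>40$, then $\sum x_i > (k/4)\ln k$.

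To prove this, I would apply Jensen's inequality to the convex function $t \mapsto e^{-t}$ on the $k/2$ values $x_i$, obtaining
\[
\frac{2}{k}\sum_{i=1}^{k/2} e^{-x_i} \;\ge\; \exp\!\left(-\frac{2}{k}\sum_{i=1}^{k/2} x_i\right).
\]
Then I would argue by contrapositive: assume $\sum x_i \le (k/4)\ln k$. The right-hand side above is then at least $\exp(-\tfrac{1}{2}\ln k) = k^{-1/2}$, forcing $\sum e^{-x_i} \ge \sqrt{k}/2$. For $k>40$ one has $\sqrt{k}/2 > 3$, contradicting the hypothesis; hence $\sum x_i > (k/4)\ln k$, as required.

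I do not anticipate a real obstacle, since the argument is just Jensen plus one arithmetic check. The only nontrivial verification is that the numerical threshold $\sqrt{k}/2 > 3$ kicks in by $k=40$: this already holds for $k>36$, so the hypothesis $k>40$ in the claim is comfortable. Intuitively, convexity of $e^{-t}$ makes the equal configuration extremal, and when all $x_i$ equal $\ln(k/6)$ both sides of the target inequality have the right order $k\ln k$, so Jensen is tight enough to close the gap.
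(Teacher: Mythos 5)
Your proof is correct and is essentially the same as the paper's: the inequality you get from Jensen applied to $t\mapsto e^{-t}$ at the points $x_i=10d_i/k$ is exactly the paper's Jensen step for $-\ln x$ at the points $\tau_i=e^{-10d_i/k}$, just written in exponentiated form and phrased as a contrapositive rather than a direct chain. The arithmetic check ($\sqrt{k}/2>3$ for $k>36$) matches the paper's use of $k>40$.
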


\begin{proof}
Denote $\tau_i:=e^{-10 d_i/k}$, and let $f(x):=-\ln x$. The function $f(x)$ is convex, and thus we have
$$
\sum_{i=1}^{k/2} \frac{10 d_i}{k} = \sum_{i=1}^{k/2} f(\tau_i) \ge \frac{k}{2} \cdot f\left(\frac{2}{k}\sum_{i=1}^{k/2} \tau_i\right) > \frac{k}{2} \ln \frac{k}{6}>\frac{k}{4}\ln k,
$$
since $k>40$. This implies the claim. 
\end{proof}

To finish the proof let $D_t$ denote the random variable representing the value of $\sum_{i=1}^{k/2} d_i$ after $t$ queries. Let $Z_t=D_t-\frac{t}{k}$. 
At each time step, a query to $Y=2i$ will on average not change $d_i$ if the element from $B_i$ is not selected for comparison with $Y$. If it is selected, 
it will change $d_i$ by at most $1$. Thus, on average, $D_t$ only grows by at most $\frac{1}{k}$ after each time step. Thus $Z_t$ is a supermartingale. 
Let $T$ be the random variable representing the time at which we stop and output $S$.  By the optional stopping time theorem, we have $E[Z_T]\le 0$, which implies
$E[T]\ge k\cdot E[D_T]$.

If our overall success probability is $>0.75$, it must be the case that with probability $>1/2$ the probability of the output $S$ being correct conditioned 
on the observed $\{(l_i,h_i)\}_{i=1}^{k/2}$ is $>1/2$. Thus by Claims~\ref{cl:1}, \ref{cl:2} and \ref{cl:3}, we have $D_T>\frac{k^2}{40} \ln k$ with probability 
$>1/2$. Thus, 
$$
E[T] \ge k\cdot E[D_T] > k \cdot \frac{1}{2} \cdot \frac{k^2}{40} \ln k = \Om(k^3 \log k), 
$$
completing the proof of the lower bound. 

\begin{remark}
The proof in the regime $k>n - 2$ is very similar. The only difference is that there are $n/2$ bins now, and we'd get $E[D_T]=\Om (k n\log n)$ instead 
of $\Om(k^2 \log n)$, and thus $E[T]=\Om(k^2 n \log n)$. 
\end{remark}

We will now study the case where $k \leq n$.

\section{Optimal upper bounds} \label{sec:algorithm}

As discussed in the previous sections, it is not immediately clear how to make use of the information gained from queries because we do not know which of the $k$ integers the information corresponds to. In this section, we present an algorithm for solving this problem. The algorithm is optimal when the probability of error required is constant (which means its worst case running time matches the lower bound). Our algorithm finds each of the $k$ numbers individually, without attempting to use information gained when finding one integer to find another integer. We first introduce a concept we will use in all our algorithms: 
\begin{mydef}
The $k$-position of an integer $y$ is the number of integers in $S$ that have a value less than or equal to $y$
\end{mydef}
The general technique of the algorithms is to do a binary search for a chosen integer, but repeat each query of the binary search enough times to know the $k$-position of the queried integer. 
A straightforward application of binary search with repeated queries would take $\Om(k^2 \log^2 n)$ queries to find the $k$-position of a number, even with a constant error probability. 
We essentially use the noisy binary search technique of Feige et. al. \cite{Feige} to attain the optimal query complexity.  We start with the following simple lemma:

\begin{lemma} \label{lem:cointoss}
We can find the $k$-position of integer $y$ by making $2k^2\log{\frac{2}{\delta}}$ queries with the probability of being correct being at least $1-\delta$.
\end{lemma}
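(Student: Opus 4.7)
The plan is to reduce the task to estimating the success probability of a Bernoulli random variable to within additive error $1/(2k)$, and then apply a standard concentration inequality.

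First I would set up the observation that when we query $y$, the answer is a Bernoulli random variable $B$ with $\Pr[B = {\le}] = j/k$, where $j$ is exactly the $k$-position of $y$. Since the element of $S$ compared with $y$ is chosen independently and uniformly at random in each round, repeated queries of $y$ produce i.i.d.\ Bernoulli trials with this common parameter $p = j/k$. The possible values of $p$ form the discrete grid $\{0, 1/k, 2/k, \ldots, 1\}$, with consecutive values separated by $1/k$. Hence, if we have an estimate $\hat p$ of $p$ with $|\hat p - p| < 1/(2k)$, then rounding $\hat p$ to the nearest multiple of $1/k$ recovers $j$ exactly.

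Next I would apply a Hoeffding (or Chernoff) bound. Making $N := 2k^2 \log(2/\delta)$ independent queries to $y$ and letting $\hat p$ be the empirical fraction of ``${\le}$'' responses, Hoeffding's inequality gives
\begin{equation*}
\Pr\!\left[\,|\hat p - p| \ge \tfrac{1}{2k}\right] \;\le\; 2\exp\!\left(-2N \cdot \tfrac{1}{4k^2}\right) \;=\; 2\exp\!\left(-\tfrac{N}{2k^2}\right).
\end{equation*}
Plugging in $N = 2k^2\log(2/\delta)$ makes the right-hand side equal to $\delta$. The algorithm that queries $y$ exactly $N$ times, forms $\hat p$, and outputs the integer $j \in \{0,1,\ldots,k\}$ closest to $k\hat p$ therefore succeeds with probability at least $1 - \delta$.

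There is no real obstacle here; the only subtlety is identifying the correct accuracy target ($1/(2k)$ rather than $1/k$) so that the nearest-multiple-of-$1/k$ rounding is unambiguous, and then verifying that $N = 2k^2\log(2/\delta)$ is exactly what the Hoeffding bound demands at that accuracy. The lemma thus follows from a single application of concentration, and this bound is also what the later noisy-binary-search argument will need to improve upon by amortizing samples across the $\Theta(\log n)$ levels of a binary search.
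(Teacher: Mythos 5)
Your proof is correct and follows essentially the same route as the paper's: reduce to estimating the Bernoulli parameter $p = K_y/k$ to within additive accuracy $\tfrac{1}{2k}$ so that rounding to the nearest multiple of $\tfrac{1}{k}$ is unambiguous, and then apply Hoeffding's inequality with $N = 2k^2\log\tfrac{2}{\delta}$ samples. The only difference is cosmetic (you invoke Hoeffding directly where the paper cites a generic concentration bound), so there is nothing to add.
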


\begin{proof}
Let $K_y$ be the $k$-position of $y$. We do $m$ queries of $y$ to find $K_y$. For each query $Q_i$, the probability of a response being `$\leq$' or `$>$' is given simply in terms of $K_y$: $$Pr[Q_i = '\leq'] = \frac{K_y}{k}$$ $$Pr[Q_i = '>'] = \frac{1-K_y}{k}$$ because $K_y$ is the number of integers in $S$ less than or equal to $y$ and each such integer is chosen as the $X_i$ for a query with equal probability. We use the analogy that the random variable $Q_i$ is a coin with probability of heads (which represents `$\leq$') being $p = \frac{K_y}{k}$. Given $m$ tosses of the coin, of which $x$ are heads, we can approximate $p$ as: $\hat{p} = \frac{x}{m}$. We need to find the relation between the number of tosses $m$ and the probability of error in this approximation. Using standard concentration bounds \cite{Raginsky}, we see that $m \geq \frac{1}{2\varepsilon^2}\log{\frac{2}{\delta}}$ coin tosses are needed to guarantee that $|\hat{p}-p| \leq \varepsilon$ with error at most $\delta$ (where $\varepsilon > 0$). 

We need to decide on a value for $\varepsilon$. Note that $K_y$ is an integer in the range $[1, k]$ and therefore, $p$ can only take on the values $0, \frac{1}{k}, \frac{2}{k}, \ldots, \frac{k}{k}$. Thus, we need $\varepsilon \leq \frac{1}{2k}$ so then we can always round $\hat{p}$ to the closest $\frac{i}{k}$, where $i \in \mathbb{Z}$ and $0 \leq i \leq k$. Using this in the results from \cite{Raginsky}, we see that $m = 2k^2\log{\frac{2}{\delta}}$ coin tosses are enough to guarantee that we know the correct value of $p$ with probability of error being at most $\delta$. Given $p$, we have $K_y = kp$ so we have the $k$-position of $y$. 
\end{proof}

We note that this immediately lets us solve the problem for $k \geq n$:

\begin{corollary} \label{cor:bigksolution}

When $k \geq n$, there is an $O(k^2 n \log{n})$ algorithm to find all $k$ integers in $S$ with probability $1-n^{-c}$ for all constant $c>0$.

\end{corollary}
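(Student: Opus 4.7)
The plan is to exploit the fact that when $k \geq n$, the multi-set $S$ is completely determined by the values $K_1, K_2, \ldots, K_n$, where $K_y$ is the $k$-position of $y$ as defined above. Indeed, the multiplicity of the value $y$ in $S$ is exactly $K_y - K_{y-1}$ (with the convention $K_0 = 0$), so recovering all the $k$-positions suffices to recover $S$.

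First, I would apply Lemma~\ref{lem:cointoss} to each integer $y \in \{1, 2, \ldots, n\}$ separately. To achieve an overall failure probability of at most $\delta = n^{-c}$, I would invoke the lemma with per-query error parameter $\delta' = n^{-(c+1)}$ and take a union bound over the $n$ values of $y$; the total failure probability is then at most $n \cdot n^{-(c+1)} = n^{-c}$.

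The total number of queries is then bounded by
\[
n \cdot 2k^2 \log\bigl(2/\delta'\bigr) = 2nk^2 \log\bigl(2 n^{c+1}\bigr) = O(k^2 n \log n),
\]
where the constant in the $O(\cdot)$ depends on $c$ but not on $n$ or $k$. This matches the desired bound.

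There is essentially no obstacle here: the corollary is an immediate consequence of Lemma~\ref{lem:cointoss} once one observes that when $k \geq n$, querying every one of the $n$ possible values is affordable within the budget $O(k^2 n \log n)$, and that the sequence of $k$-positions uniquely reconstructs the multi-set. The only ``care'' needed is the standard union-bound adjustment of the per-integer error parameter, which costs only a constant factor in the logarithm.
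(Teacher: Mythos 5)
Your proposal is correct and follows exactly the same route as the paper: determine the $k$-position of every $y \in \{1,\ldots,n\}$ via Lemma~\ref{lem:cointoss} with per-integer error $n^{-(c+1)}$, union-bound over the $n$ integers, and recover the multiplicities from consecutive differences of the $k$-positions. No gaps.
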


\begin{proof}

We find the $k$-position of all $n$ integers in the range $[1, n]$. Given the $k$-position of all $n$ integers, we know how many of the $k$ numbers have each integer value. If the $k$-position of $Y-1$ is $i$ and the $k$-position of $Y$ is $i+j$, we know there are $j$ of the chosen numbers with the value $Y$ (for $1 < Y \leq n$. For $Y = 1$, we know the number of the chosen integers with this value is equal to the $k$-position of $Y$).

To find the $k$-position of an integer with probability of error at most $\delta$, we need to perform $O(k^2 \log{\frac{2}{\delta}})$ queries. If we want the probability of error of the algorithm to be a constant, we need the probability of error of finding the $k$-position of each integer to be at most $\de=n^{-(c+1)}$ so that applying a union bound gives a total probability of error $<n^{-c}$ (since we find the $k$-position of $n$ integers). Thus, to find the $k$-position of each integer we need to perform $O\left(k^2 \log{\frac{2}{\frac{1}{n^{c+1}}}}\right) = O\left(k^2 \log{n}\right)$ queries. Since we do this for $n$ integers, the total number of queries we make is: ${O(k^2 n \log{n})}$.
\end{proof}

We provide an example to illustrate an approach using what we have so far. Suppose $n = 16$, $k = 2$, $S = \{3, 10\}$ and let $m = 2k^2\log{\frac{2}{\delta}}= 8\log{\frac{2}{\delta}}$. We want to find the lowest of the $k$ numbers first. We do a binary search where we repeat each query $m$ times. Our decision at each stage of the binary search is determined by the $k$-position found for the number of that stage. Therefore, we first do $m$ queries of $y=8$. From this, we calculate $K_y = 1$ (note the probability of error for this statement is $\delta$). So there is one of the $k$ numbers below or equal to 8. Next, we do $m$ queries of $y=4$ and find again that $K_y = 1$. When we do $m$ queries of $y=2$, we find that $K_y = 0$. This tells us that none of the $k$ numbers are below or equal to 2. Therefore, we do $m$ queries of $y=3$ and find that $K_y = 1$. If one of the $k$ numbers is less than or equal to 3, but none of them are less than or equal to 2, we conclude that one of the $k$ numbers is 3. We then repeat the same process to find the second of the $k$ numbers. 

However, this approach is problematic because of the constant error each time we find the $k$-position of a number. This flaw is mentioned for a similar algorithm in \cite{Karp}. The number of queries we make is $O(mk\log{n}) = O\left(k^3\log{n}\log{\frac{2}{\delta}}\right)$. Each group of queries of the same $y$ ($m$ of them) give the wrong result with probability $\delta$. Applying a union bound, our overall probability of error ($\Delta$) is $\Delta = k\log{(n)}\delta$. If we want $\Delta$ to be a constant, we need $\delta = \frac{1}{k\log{n}}$ and thus, the number of queries we make is actually $O\left(k^3\log{(n)}\log{(2k\log{n})}\right)$ 

To alleviate this problem, we model our algorithm as a random walk on a tree. In using this technique, we follow \cite{Feige}. In \cite{Feige}, the random walk approach is taken to do a noisy binary search. We use this technique to find each of the chosen $k$ integers, although each step of the random walk is modified to accommodate our lack of information about which of the $k$ integers was chosen in a particular query. We use a binary tree where the leaves are (in order) the integers $1, 2, \ldots, n$. The internal nodes represent intervals that are the union of the leaves in their subtrees. For example, the root node has the interval $[1, n]$ and the left child of the root has the interval $[1, \lfloor\frac{n}{2}\rfloor]$. The tree height is $\log{n}$. Finally, we extend this tree by adding chains of length $m' = O(\log{n})$ to each of the leaf nodes, where the nodes in these chains have the same value as the leaf they are attached to. An example tree with $n = 4$ is shown in Figure \ref{fig:treefig} below.

\begin{figure}[ht]
	\centering
		\includegraphics[scale=0.5]{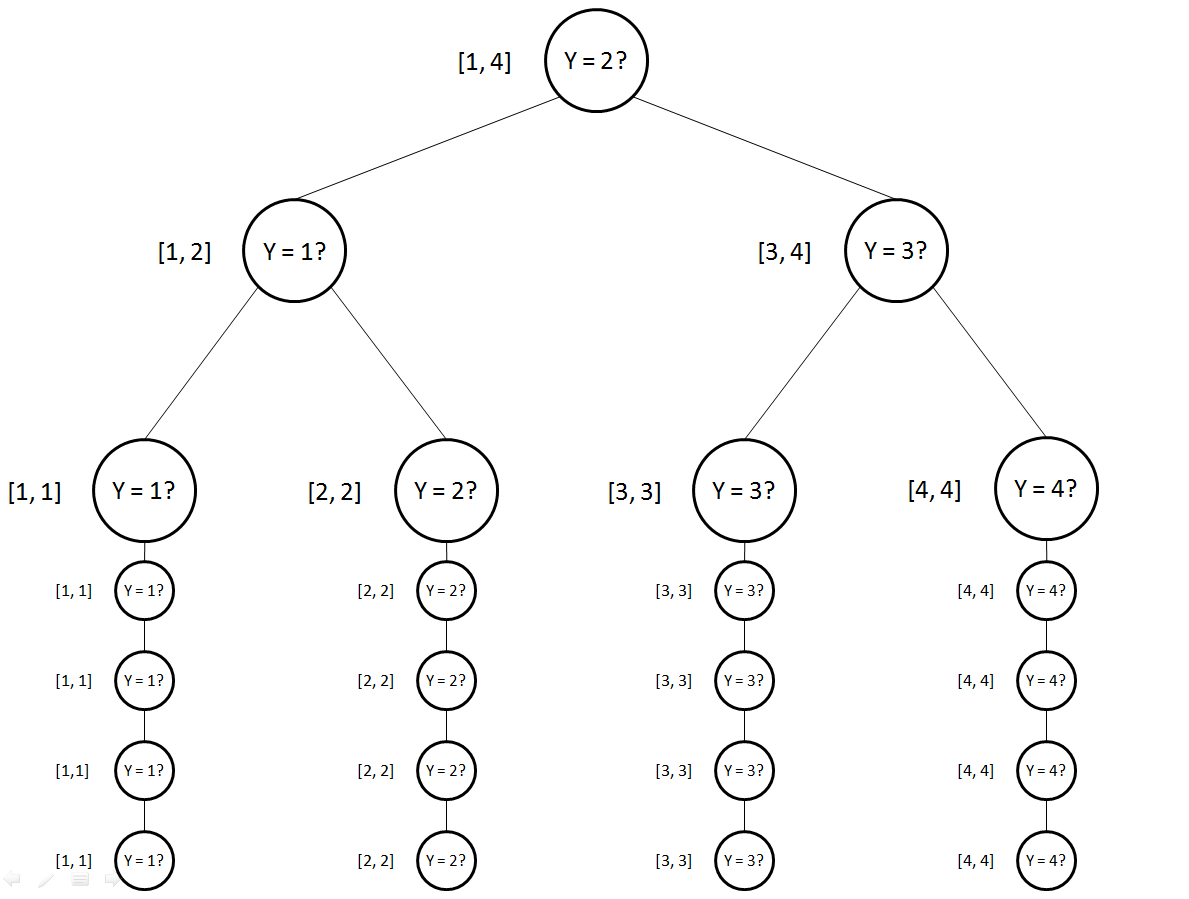}
	\caption{Tree for the random walk with $n = 4$}
	\label{fig:treefig}
\end{figure} 

\subsection{Algorithm} \label{subsec:algdetails}

We discuss an algorithm for finding the $t^{th}$ of the $k$ chosen integers. This algorithm is repeated $k$ times (once for each of the $k$ numbers). Starting at the root, for each node $v$ we take the following two steps:
\begin{enumerate}
	\item We first check whether the $t^{th}$ chosen integer is in the range of the node (call it $[a, b]$). To do this, we find the $k$-position of $a-1$ and $b$ by doing $8k^2$ queries of each of them. If we find that the $k$-position of $a-1$ is at most $t-1$ and the $k$-position of $b$ is at least $t$, then the $t^{th}$ number lies in the range $[a, b]$. Otherwise, we backtrack up the tree to the parent node of $v$ . 
	\item If, according to the first step, the $t^{th}$ number lies in the range $[a, b]$, we do $10k^2$ queries of the middle value of the range of the node (call this $u$ where $u = \lfloor\frac{a+b}{2}\rfloor$). If $v$ is not a leaf (or on a leaf chain) and the $k$-position of $u$ is at most $t-1$, we choose the right child of $v$. If the $k$-position of $u$ is at least $t$, we choose the left child of $v$. If $v$ is a leaf (or on a leaf chain), we go down the chain further regardless of the result of the queries. 
\end{enumerate}

Note that there is a constant probability of error each time we determine the $k$-position of an integer. This leads to a constant probability of choosing the wrong node to go to next. We will analyze this probability shortly. 

The algorithm walks for $m=O(\log{n})$ steps and then stops, where $m < m'$. If it stops on an internal node, the algorithm failed. If it stops on one of the leaf chains (or a leaf node), it outputs the value of the leaf (i.e. declares this value to be the value of the $t^{th}$ of the $k$ numbers). 

The following theorem summarizes our results:

\begin{theorem} \label{thm:algrunningtime}
Our algorithm finds all $k$ integers in $S$ in $O\left(k^3\log{\left(\frac{n}{\delta}\right)}\right)$ time with probability of error at most $\delta$ for $k \leq n$
\end{theorem}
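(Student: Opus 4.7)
The plan is to analyze the random-walk algorithm of Section~\ref{subsec:algdetails} by reducing it to a biased random walk on the tree, and to bound the overall error with a Chernoff-type concentration argument together with a union bound over the $k$ invocations of the algorithm.

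First I would calibrate the per-step accuracy using Lemma~\ref{lem:cointoss}. By taking the number of queries per $k$-position estimate to be a sufficiently large constant multiple of $k^2$, each $k$-position is computed correctly with probability at least $1-\delta_0$, where $\delta_0$ is a small constant of our choosing. Each step of the walk involves only $O(1)$ such estimates (one for each of $a-1$, $b$, and the midpoint $u$), so by a union bound the step is ``correct'' with probability at least $p := 1 - O(\delta_0)$, which can be pushed above $3/4$. Here ``correct'' means: at a node whose range contains the target $x_t$, we descend toward $x_t$; at a node whose range does not contain $x_t$, we backtrack toward the root; and on the leaf chain corresponding to $x_t$, we descend further.

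Next I would model the walk as a $\pm 1$ random walk. Fix the target leaf-chain bottom node $v^*$ and let $d_s$ be the tree distance from the current node to $v^*$ after $s$ steps. By the preceding paragraph, a correct step decreases $d_s$ by $1$, while an incorrect step increases $d_s$ by at most $1$. Thus $d_s$ is dominated by a biased random walk with drift $-(2p-1) < 0$. Running the algorithm for $m = \Theta(\log(n/\delta'))$ steps and choosing the leaf-chain length $m'$ to be a slightly larger constant multiple of $\log(n/\delta')$, a standard Chernoff bound gives that $d_m = 0$, i.e.\ the walk terminates inside the correct leaf chain, with probability at least $1 - \delta'$. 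Setting $\delta' = \delta/k$ and union-bounding over the $k$ invocations yields overall error at most $\delta$; the total query count is $k$ invocations $\times\; m$ steps $\times\; O(k^2)$ queries per step $= O(k^3 \log(n/\delta))$.

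The main obstacle is justifying that the one-dimensional $\pm 1$ model really captures the true behavior of the walk on the tree, in particular when the walk has strayed off the correct root-to-leaf path and must spend several rounds backtracking. I would need to verify that the backtracking rule (step~1 of the algorithm) recognizes a stray node with high probability at every depth, uniformly across the tree, so that the $\pm 1$ drift applies at every step regardless of the current location; one also has to handle the harmless edge cases when the walk tries to backtrack past the root or when it reaches a leaf early but then is bumped off by a rare step-1 failure. Once this is established, the Chernoff tail bound on the number of correct steps among $m$ independent trials delivers the claimed query complexity.
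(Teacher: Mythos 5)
Your proposal follows essentially the same route as the paper: calibrate each $k$-position estimate to a constant error via Lemma~\ref{lem:cointoss}, view the walk as a biased $\pm 1$ walk toward the correct leaf chain with per-step success probability a constant above $1/2$, apply a Chernoff bound over $m=\Theta(\log(n/\delta'))$ steps, and union-bound over the $k$ targets, giving $k\cdot m\cdot O(k^2)=O(k^3\log(n/\delta))$ queries. The only cosmetic difference is that you set $\delta'=\delta/k$ uniformly while the paper splits into the cases $\delta\ge 1/n$ and $\delta<1/n$; the substance is identical, and the "obstacle" you flag (that the drift argument applies uniformly at stray nodes and on the leaf chains) is exactly what the paper's edge-orientation argument in Lemma~\ref{lem:algcorrectness} is meant to handle.
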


To reach this theorem, we use the following lemma:

\begin{lemma} \label{lem:algcorrectness}
The algorithm finds the correct $t^{th}$ integer in $S$ with the probability of error being at most $e^{-\frac{m}{35}}$, where $m$ is the number of steps in the random walk.
\end{lemma}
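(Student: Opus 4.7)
The plan is to reinterpret the algorithm as a biased random walk on the extended tree and apply a concentration argument, exactly in the spirit of the noisy binary search of Feige et al.\ \cite{Feige}. Fix the $t^{th}$ smallest element of $S$, let $L$ denote the bottom of its corresponding leaf chain, and call the unique path from the root down to $L$ the \emph{correct path}. At every node $v$ visited by the walk, declare the \emph{correct move} at $v$ to be: move one step further down the correct path toward $L$ if $v$ lies on the correct path (including anywhere along the correct chain), and backtrack to the parent of $v$ otherwise. In either case, a correct move decreases the graph distance $\phi(v)$ from the current node to $L$ by exactly $1$, while an incorrect move increases it by exactly $1$ (the root is a harmless boundary case since backtracking is unavailable there).

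First I would bound the probability that a single iteration performs the correct move. Each iteration invokes Lemma~\ref{lem:cointoss} at most three times (twice in Step~1 using $8k^2$ queries, and at most once in Step~2 using $10k^2$ queries), so by the choice of constants each individual $k$-position determination is correct except with some fixed constant probability $\delta_0$. A union bound then shows that each iteration performs the correct move with probability at least a universal constant $p > 1/2$, independent of $n$ and $k$; this $p$ is what ultimately pins down the exponent $1/35$.

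Next I would analyze the depth process $\phi_i := \phi(\text{node visited after } i \text{ iterations})$. The observation in the first paragraph couples $\phi_i$ to a biased $\pm 1$ random walk on $\mathbb{Z}_{\ge 0}$ with constant negative drift $1-2p$. The algorithm outputs the correct integer precisely when $\phi_m$ is small enough to place the walk on the correct chain after $m$ iterations; since $\phi_0 = O(\log n)$ and the chain has length $m' \gg \log n$, this only requires that the number of correct iterations exceed the number of incorrect ones by at least $\phi_0$ (and not overshoot off the bottom of the chain, which is avoided by padding). Applying Hoeffding's inequality to the indicators of ``correct iteration'' gives an error probability that decays as $\exp(-\Omega(m))$, with the numerical constant tuned to $1/35$ by plugging in the explicit value of $p$ obtained above.

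The main obstacle will be carefully justifying the coupling in the previous paragraph: since the extended tree is not a line, one must check that from any node $v$ (on-path or off-path) an incorrect move really does increase $\phi$ by exactly $1$, and that the indicators of correct moves stochastically dominate i.i.d.\ Bernoulli$(p)$ random variables regardless of the walk's history. The ``on-path'' case is clean, but for ``off-path'' nodes one needs to verify that the probability of correctly backtracking is still at least $p$ even though only Step~1 is executed; this follows from the fact that the off-path case uses strictly fewer calls to Lemma~\ref{lem:cointoss}, so its correctness probability is only larger. Once this stochastic domination is set up, the Hoeffding estimate is routine and the constant $1/35$ falls out of the specific error probabilities granted by the $8k^2$ and $10k^2$ query counts.
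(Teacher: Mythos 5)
Your proposal matches the paper's proof essentially step for step: orient the walk toward the correct leaf chain so it becomes a one-dimensional $\pm 1$ walk, bound the per-step probability of a correct move by a constant ($\ge 0.7$, from a union bound over the two $8k^2$-query checks and the one $10k^2$-query check), observe that success only requires the net rightward displacement after $m$ steps to exceed the tree height $R=\log n$, and finish with a Chernoff/Hoeffding bound whose explicit constant gives the exponent $m/35$. The only differences are cosmetic (Hoeffding vs.\ the multiplicative Chernoff form, and your slightly more careful remark about stochastic domination of the step indicators, which the paper handles by simply asserting independence of the per-node decisions), so this is the same argument.
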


\begin{proof}
We need to prove that the algorithm's position on the walk after $m$ steps is the correct leaf chain with high probability. Orient all edges of the tree so they are directed towards the correct leaf chain (and within this leaf chain they are directed down). We can do this because the graph is a tree (there is only one path between every two vertices) and there is only one correct leaf. We can now consider the algorithm's position in the tree as a one dimensional random walk. We let the starting point of the walk be $0$ (the root of the tree), the correct leaf be $R$ steps to the right and any of the wrong leaves be $R$ steps to the left. Note that $R = \log{n}$ (height of the tree). 

We need to find the probabilities of moving left and right in the random walk. We will show that the probability of moving in the correct direction (to the right) is at least $0.7$ at every node. Furthermore, note that the decision made at any node is independent of the previous steps in the random walk. Let $q$ be the probability of going left at any move. This is equivalent to the probability of going along the wrong direction of an edge, which is equivalent to making a mistake somewhere in choosing the next vertex. The probability of incorrectly calculating whether the $t^{th}$ number is in the range $[a, b]$ is at most the probability that we incorrectly calculate the $k$-position of either $a-1$ or $b$. Since we do $8k^2$ queries of each, by Lemma \ref{lem:cointoss} we know that the probability of error in calculating the $k$-position of each is $\delta$ where $2\log{\frac{2}{\delta}} = 8 \Rightarrow \delta = \frac{1}{8}$. So the probability of incorrectly calculating the $k$-position of either $a-1$ or $b$ is at most $1 - \left(\frac{7}{8}\right)^2 = \frac{15}{64}$. Similarly, we do $10k^2$ queries of $u$, so the probability of error is $\delta$ where $2\log{\frac{2}{\delta}} = 10 \Rightarrow \delta = \frac{1}{16}$. Thus, the total probability of error at each node is $\frac{15}{64} + \frac{1}{16} < 0.3$. Therefore, $q < 0.3$ and $p \geq 0.7$, where $p$ is the probability of going to the right (i.e. the correct direction). Figure \ref{fig:randomwalk} illustrates the random walk space. 

\begin{figure}[ht]
	\centering
		\includegraphics[scale=0.4]{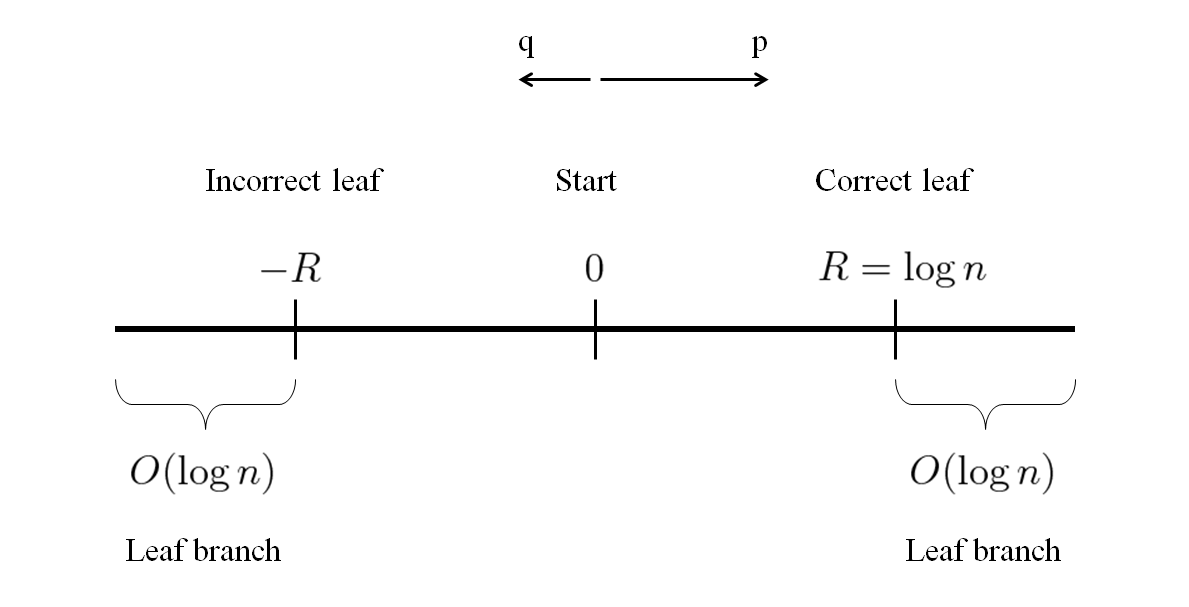}
	\caption{The random walk space}
	\label{fig:randomwalk}
\end{figure} 

For the algorithm to be correct, it must be on or to the right of $R$ after $m$ steps (so it returns the correct integer), otherwise it is wrong. Let $X$ be the random variable denoting the number of moves to the right made after $m$ moves. Then $m-X$ is the number of moves to the left. Therefore, the algorithm is correct if $X - (m - X) = 2X - m \geq R$. This is equivalent to the condition that $X \geq \frac{R + m}{2}$. Then the probability that the algorithm is correct is $Pr[X \geq \frac{R + m}{2}] = 1 - Pr[X < \frac{R + m}{2}]$ and $Pr[X < \frac{R + m}{2}]$ is the probability of error we want to bound. To find $E[X]$, let $X_i$ be an indicator random variable that is 1 if the algorithm moves to the right on the $i^{th}$ move and 0 otherwise. Note that $Pr[X_i = 1] = p \Rightarrow E[X_i] = p$. Therefore, $E[X] = E[X_1] + E[X_2] + \ldots + E[X_m] = pm$ by linearity of expectation. We want to use a Chernoff bound to bound the probability of error, so we need to find a $\delta$ such that:
\begin{align*}
&\frac{m+R}{2} = (1-\delta)pm \\
\Rightarrow \, &1-\delta = \frac{m + R}{2pm} \\
\Rightarrow \,&\delta = \frac{2pm - m - R}{2pm} 
\end{align*}
Note that $0 \leq \delta \leq 1$ because $0 \leq 2pm - m - R \leq 2pm$. Since each step of the random walk is independent of the other steps (i.e. $X_i$ is independent of $X_j$ for $i \neq j$), we can use the Chernoff bound (\cite{Lehman}):
\begin{align*}
Pr[X < \frac{m + R}{2}] &\leq e^{-\frac{\delta E[X]}{2}} \\
&= e^{-\left(\frac{2pm - m - R}{2pm}\right)^2 \frac{pm}{2}} \\
&= e^{-\frac{\left(2pm - m - R\right)^2}{8pm}} 
\end{align*}
Recall that $p \geq 0.7$ and set $m = x\log{n}$, where $x$ is a constant. Then $\frac{(2pm - m - R)^2}{8pm} \geq \frac{\left(1.4x - x - 1\right)^2}{5.6x}\log{n}$. We want to write this as $\frac{m}{d}$ where $d$ is a constant. Then $d = \frac{x}{\frac{\left(0.4x - 1\right)^2}{5.6x}}$. Note that as $x$ increases, $d$ decreases to some asymptotic value: 
\begin{align*}
\lim_{x\to\infty} \frac{x}{\frac{\left(0.4x - 1\right)^2}{5.6x}} &= \lim_{x\to\infty} \frac{5.6 x^2}{\left(0.4x - 1\right)^2} \\
&= \lim_{x\to\infty}\frac{5.6}{\left(0.4 - \frac{1}{x}\right)^2} \\
&= \frac{5.6}{0.4^2} = 35
\end{align*}

Then we have that $\frac{\left(2pm - m - R\right)^2}{8pm} \geq \frac{m}{35}$. Therefore, $$Pr[X < \frac{m + R}{2}] \leq e^{-\frac{m}{35}}.$$ Thus, we have bounded the probability of error as required.
\end{proof}

We apply Lemma \ref{lem:algcorrectness} to prove the bound on the full algorithm.  
Even though our lower bounds works when the error probability is constant, the algorithm applies even when the error is very small ($n^{-O(1)}$).  We are now ready to present the proof for Theorem \ref{thm:algrunningtime}.

\begin{proof}
We prove separately the cases when $\delta \geq \frac{1}{n}$ and when $\delta < \frac{1}{n}$. In the first case, we set $m = 70\log{n}$. By Lemma \ref{lem:algcorrectness}, the probability of not finding the correct $t^{th}$ number is at most $e^{-\frac{70\log{n}}{35}} = e^{\ln{n^{-2 / \ln{2}}}} < \frac{1}{n^2}$. Applying a union bound of this over the $k$ numbers we need to find, the probability of error is at most $\frac{k}{n^2} \leq \frac{\sqrt{n}}{n^2} = \frac{1}{n^{1.5}}$ because $k \leq \sqrt{n}$. Since $\frac{1}{n^{1.5}} < \frac{1}{n} \leq \delta$, the probability of error is bounded as required. So we need in total $70k\log{n}$ steps of the random walk algorithm. Recall that each such step takes $O(k^2)$ queries. Therefore, in total, we have a running time of $O(70k^3\log{n}) = O(k^3\log{n}) = O(k^3\log{\frac{n}{\delta}})$ since $\delta < 1$. 

We now consider the case when $\delta < \frac{1}{n}$. Set $m = 70\log{\frac{1}{\delta}}$. The probability of not finding the correct $t^{th}$ number is at most $e^{-\frac{70\log{\frac{1}{\delta}}}{35}} = e^{-\frac{2}{\ln{2}} \ln{\frac{1}{\delta}}} < \delta^2$ by Lemma \ref{lem:algcorrectness} (and that $\delta < 1$). Applying a union bound over the $k$ numbers we need to find, the overall probability of error is $k\delta^2 < n\delta^2 < \delta$ as required. Thus, we need $O\left(k\log{\frac{1}{\delta}}\right)$ steps in the random walk, where each consists of $O(k^2)$ queries. Therefore, the total running time is $O\left(k^3\log{\frac{1}{\delta}}\right) = O\left(k^3\log{\frac{n}{\delta}}\right)$.
\end{proof}

%\section{Conclusion}
%
%Our paper explored the problem of finding $k$ integers where each is in the range $[1, n]$ by repeatedly querying an integer and receiving a response comparing this integer to one of the $k$ integers (chosen randomly). We used information theoretic techniques to prove a lower bound of $\Omega(k^3 \log{n})$ when $k \leq \sqrt{n}$. We have presented an algorithm with running time $O(k^3 \log{\frac{n}{\delta}})$, which is optimal when the probability of error is a constant. Furthermore, we explored several variants of the problem. When $k > n$, the lower bound becomes $\Omega(k^2 n)$ and we discussed an algorithm that runs in $\tilde{O}(k^2 n)$ time.
%
%\section{Acknowledgments}
%
%I would like to thank my adviser Mark Braverman for supervising the project.

\newpage
\appendix

\end{document}